\newcommand{\E}{\mathbb{E}}
\newcommand{\Prob}{\mathbb{P}}
\newcommand{\diag}{\mathrm{diag}}
\newcommand{\LR}{\bold{R}^{\frac{1}{2}}}
\newcommand{\RT}{\bold{T}^{\frac{1}{2}}}
\newcommand{\MS}{\bold{S}^{\frac{1}{2}}}
\newcommand{\LD}{{\bold{D}^{\frac{1}{2}}}}
\newcommand{\FR}{{\bold{R}}}
\newcommand{\FT}{{\bold{T}}}
\newcommand{\FS}{{\bold{S}}}
\newcommand{\BQ}{{\bold{Q}}}
\newcommand{\BH}{{\bold{H}}}
\newcommand{\BI}{{\bold{I}}}
\newcommand{\BX}{{\bold{X}}}
\newcommand{\BZ}{{\bold{Z}}}
\newcommand{\BY}{{\bold{Y}}}
\newcommand{\FW}{{\bold{W}}}
\newcommand{\BG}{{\bold{G}}}
\newcommand{\BU}{{\bold{U}}}
\newcommand{\BA}{{\bold{A}}}
\newcommand{\BB}{{\bold{B}}}
\newcommand{\BC}{{\bold{C}}}
\newcommand{\BD}{{\bold{D}}}
\newcommand{\BW}{{\bold{W}}}
\newcommand{\BM}{{\bold{M}}}
\newcommand{\BF}{{\bold{F}}}
\newcommand{\BO}{{\mathcal{O}}}
\DeclareMathOperator{\Tr}{Tr}
\newcommand{\RNum}[1]{\uppercase\expandafter{\romannumeral #1\relax}}
\newcommand\numberthis{\addtocounter{equation}{1}\tag{\theequation}}
\newtheorem{remark}{Remark}
\newtheorem{theorem}{Theorem}
\newtheorem{lemma}{Lemma}
\begin{document}

\title{URLLC in IRS-Aided MIMO Systems: Finite Blocklength Analysis and Design} 

\author{Xin Zhang,~\IEEEmembership{Graduate Student Member,~IEEE} and Shenghui Song,~\IEEEmembership{Senior Member,~IEEE}
\thanks{
The authors are with the Department of Electronic and Computer Engineering, The Hong Kong University of Science
and Technology, Hong Kong (e-mail: xzhangfe@connect.ust.hk, eeshsong@ust.hk). 

This work was supported by a grant from the NSFC/RGC Joint Research Scheme sponsored by the Research Grants Council of the Hong Kong Special Administrative Region, China and National Natural Science Foundation of China (Project No. N\_HKUST656/22) and the Shenzhen Science and Technology Innovation Committee under Grant SGDX20210823103201006.
}
}

\maketitle

\begin{abstract}
This paper investigates the ultra reliable and low latency communication (URLLC) performance of the IRS-aided MIMO system. The upper and lower bounds of the optimal average error probability (OAEP) for the coding rate within $\BO(\frac{1}{\sqrt{Mn}})$ of the capacity are derived, where $n$ and $M$ represent the blocklength and the number of transmit antennas, respectively. To achieve this goal, a new central limit theorem (CLT) for the mutual information density over the IRS-aided MIMO system is derived in the asymptotic regime where the block-length, the IRS size, and number of the antennas go to infinity with the same pace. The CLT is then utilized to derive the closed-form upper and lower bounds for the OAEP. Based on the analysis result, a gradient-based algorithm is proposed to minimize the lower bound of the OAEP by optimizing the phase shift of the IRS. Simulation results validate the fitness of the CLT and the effectiveness of the proposed algorithm in optimizing the theoretical bound, as well as the performance of practical LDPC code.
\end{abstract}


\section{Introduction}
The intelligent reflecting surface (IRS) is regarded as a promising technology to improve the efficiency of wireless communication systems~\cite{cui2014coding}. By tuning the phase shifts, IRSs can create favorable channels between the transceivers with low energy consumption due to its passive nature. Furthermore, IRSs can be easily deployed and integrated into conventional communication systems. These favorable properties have motivated many works on the performance evaluation and design of the IRS-aided systems~\cite{zhang2021large,moustakas2023reconfigurable,zhang2022outage}.  However, despite the importance of ultra reliable and low latency communication (URLLC) in future wireless communication systems, the analysis and design of IRS-aided systems with finite blocklength (FBL) is still in its infancy.


\subsection{Performance Evaluation of IRS-Aided MIMO Systems}
There have been engaging results on the performance evaluation and design of IRS-aided multiple-input-multiple-output (MIMO) systems in the infinite blocklength regime. In~\cite{zhang2021large,moustakas2023reconfigurable}, the ergodic rate of the IRS-aided MIMO system was derived by random matrix theory (RMT) and maximized by jointly optimizing the transmit covariance matrix and phase shifts. The outage probability of the IRS-aided MIMO system was given and optimized by the gradient descent algorithm in~\cite{zhang2022outage}. However, the infinite-blocklength analysis is not capable of characterizing the fundamental limits of systems with FBL and the associated design may not be able to achieve the desired URLLC performance.



\subsection{Finite Blocklength Analysis}
There are three fundamental metrics for the performance analysis with FBL, i.e., the packet error rate, the blocklength, and the coding rate. The analysis for the trade-off among the three metrics is a great challenge. In~\cite{polyanskiy2010channel}, considering the FBL, the conventional Shannon's coding rate was refined and the maximal channel coding rate was determined by a normal approximation with respect to the blocklength $n$. By evaluating the channel dispersion, the coding rate of the single-input single-output (SISO) fading channel was given in~\cite{polyanskiy2011scalar}. The FBL capacity of independent and identically distributed (i.i.d.) isotropic coherent MIMO fading channels was investigated in~\cite{collins2018coherent}. The closed-form upper and lower bounds for the optimal average error probability (OAEP) of quasi-static MIMO Rayleigh channels were derived by RMT in~\cite{hoydis2015second}. The upper and lower bounds for the OAEP of Rayleigh-product MIMO channels were given in~\cite{zhang2022second}, where the impact of the number of scatterers was also investigated. 

Existing works on the FBL analysis for the IRS-aided system focus on the SISO~\cite{xie2021user} and MISO~\cite{cheng2022robust,abughalwa2022finite} channels. To the best of the authors' knowledge, the FBL analysis for IRS-aided MIMO systems remains unavailable in the literature and the associated phase shift design has not been investigated. In this paper, we will investigate the OAEP of IRS-aided MIMO systems for rates within $\BO(\frac{1}{\sqrt{Mn}})$ of the ergodic capacity. Charactering the OAEP of IRS-aided MIMO systems turns out to be a difficult problem due to the complex structure of the channel matrix, which is a sum of a random matrix (direct channel) and a product of the random matrices (cascaded channel). To tackle this issue, we will leverage the Gaussian tools from RMT, which has been shown to be effective and widely used in the analysis of MIMO systems~\cite{hachem2008new,zhang2022asymptotic}.

Inspired by the information spectrum approach in~\cite{hayashi2009information,hoydis2015second}, bounding the OAEP can be converted to investigating the distribution of the mutual information density (MID). To this end, a central limit theorem (CLT) for the MID is set up and the closed-form upper and lower bounds for the OAEP are derived. The contributions of this paper are:

1. Assuming that the number of antennas, the size of the IRS, and the blocklength go to infinity with the same pace, we set up a CLT for the MID over IRS-aided MIMO channels for any codes with the equal energy constraint. Specifically, we show that the characteristic function of the MID converges to that of the Gaussian distribution by RMT and give the expressions for the asymptotic mean and variance in closed forms. 


2. Based on the CLT, we give explicit expressions for the upper and lower bounds of the OAEP. To increase the reliability of the IRS-aided MIMO system, we propose a gradient-based method to minimize the derived lower bound by optimizing the phase shift of the IRS. 



3. Numerical results show that the derived upper and lower bounds have a small gap, which validates the tightness of proposed bound. It is also observed that the packet error probability of the LDPC code and the lower bound of the OAEP have a similar slope, and both can be effectively decreased by the proposed phase shift optimization algorithm.




  
\textit{Notations:} The matrix and vector are represented by the bold, upper case letter and bold, lower case letter, respectively. $\bold{A}^{H}$ and $\|\BA \|$ denote the conjugate transpose and the spectral norm of $\bold{A}$, respectively. $[\BA]_{i,j}$ or $A_{ij}$ represent the $(i,j)$-th entry of $\bold{A}$. $(\cdot)^{*}$ represents the conjugate of a complex number. $\Tr\BA$ denotes the trace of $\BA$ and $\bold{I}_{N}$ denotes the $N$ by $N$ identity matrix. $\mathbb{C}^{N}$ and $\mathbb{C}^{M\times N}$ denote the $N$-dimensional vector space and the $M$-by-$N$ matrix space, respectively. $\mathcal{N}(0,1)$ and $\mathcal{CN}(0,1)$ represent the standard Gaussian distribution and circularly Gaussian distribution, respectively. $\Phi(x)$ represents the cumulative distribution function (CDF) of the standard Gaussian distribution. $a  \downarrow b$ represents that $a$ approaches $b$ from the right. The convergence in distribution is denoted by $\xrightarrow[N \rightarrow \infty]{\mathcal{D}}$. Given a set $\mathcal{S}$, $\mathrm{P}(\mathcal{S})$ denotes the set of probability measures with support of a subset of $\mathcal{S}$ and $\mathrm{supp}(\cdot)$ denotes the support operator. $\Prob(\cdot)$ represents the probability operator and $\E x$ denotes the expectation of $x$. Big-O and little-o notations are represented by $\BO$ and $o$, respectively.


\section{System Model and Problem Formulation}
\label{sec_mod}
\subsection{System Model}
Consider an IRS-aided MIMO system with an $M$-antenna base station (BS) transmitting signal to an $N$-antenna user. The received signal $\bold{y}_t\in \mathbb{C}^{N}$ (channel output) is represented as
\begin{equation}
\label{sig_mod}
\bold{y}_{t}={\BH}\bold{c}_{t}+\sigma\bold{w}_{t}, ~~t=1,2,...,n.
\end{equation}
where $\bold{c}_{t}\in\mathbb{C}^{M}$, $\bold{w}_{t} \in\mathbb{C}^{N}\sim \mathcal{CN}(0,\bold{I}_{N})$ and ${\BH}\in\mathbb{C}^{N\times M}$ denote the transmit signal (channel input), the AWGN at time $t$, and the channel matrix, respectively. $n$ represents the blocklength (the channel use required to transmit a codeword) and $\sigma^2$ is the variance of the AWGN. The quasi-statistic channel is assumed, where $\BH$ does not change in $n$ channel uses. We assume that the statistical channel state information (CSI) is available at the transmitter and the IRS, and the perfect CSI is available at the receiver. For ease of illustration, we introduce the following notations:
$\BC^{(n)}\in \mathbb{C}^{N\times n} =[\bold{c}_1,\bold{c}_2,...,\bold{c}_n]$, $\BY^{(n)}\in \mathbb{C}^{N\times n}=[\bold{y}_1,\bold{y}_2,...,\bold{y}_n]$, and $\BW^{(n)} \in\mathbb{C}^{N\times n}=[\bold{w}_1,\bold{w}_2,...,\bold{w}_n]$. Then, the maximal energy constraint and equal energy constraint (sphere constraint) are given by
\begin{equation}
\label{max_cons}
\begin{aligned}
\mathcal{S}^{n}=\{\BC^{(n)} \in\mathbb{C}^{M\times n }| \frac{\Tr\BC^{(n)}(\BC^{(n)})^{H}}{Mn} \le 1  \},
\end{aligned}
\end{equation}
and
\begin{equation}
\label{sph_cons}
\begin{aligned}
\mathcal{S}^{n}_{=}=\{\BC^{(n)} \in\mathbb{C}^{M\times n }| \frac{\Tr\BC^{(n)}(\BC^{(n)})^{H}}{Mn} = 1  \},
\end{aligned}
\end{equation}
respectively. It is obvious that the equal energy constraint is a subset of the maximal energy constraint. In this paper, we will derive the bounds for the OAEP with the maximal energy constraint, which will be bounded by the error rate with the equal energy constraint as shown in Lemma~\ref{bnd_err}.

\subsection{Channel Model}
In this paper, we consider the Rayleigh fading for the BS-user, BS-IRS, and IRS-user channels. The effective channel of the MIMO system aided by the IRS with $L$ reflecting elements is the sum of a two-hop Rayleigh channel and a single-hop Rayleigh channel. Due to the space limitation, we assume that the transmit antennas at the BS are well separated such that there is no correlation at the transmitter, and the general case will be reported in the extended version of this work.  Thus, the channel matrix is given by
\begin{equation}
\BH=\LR\BX\RT_{IRS}\bold{\Phi}\LR_{IRS}\BY+\LD\BU,
\end{equation}
where $\FT_{IRS}\in \mathbb{C}^{L\times L}$ and $\FR_{IRS}\in \mathbb{C}^{L\times L}$ represent the correlation matrices at the IRS, and $\BD\in \mathbb{C}^{N\times M}$ and $\FR \in \mathbb{C}^{N\times M}$ denote the receive correlation matrices of the BS-user and IRS-user link, respectively. $\BX \in \mathbb{C}^{N\times L}$, $\BY \in \mathbb{C}^{L\times M}$, and $\BU \in \mathbb{C}^{N \times M}$ are random matrices with i.i.d. circularly Gaussian entries. The phase shift matrix of the IRS is given by 
\begin{equation}
\bold{\Phi}\in \mathbb{C}^{L\times L} =\diag(\phi_1,\phi_2,...,\phi_L),
\end{equation}
with $\phi_l=e^{\jmath \theta_l},\theta_l \in [0,2\pi)$. In the following, we give the definitions of the metrics concerned in this paper. 
\subsection{Average Error probability}
\textit{Encoding Mapping}:
With the maximal power constraint, a $(P_e^{(n)},G_n)$-code for the model in~(\ref{sig_mod}) can be represented by the following mapping $f$,
\begin{equation}
f:\mathcal{G}_n \rightarrow \mathbb{C}^{M\times n}.
\end{equation}
$\BC_{m}^{(n)}=f(m)\in \mathcal{S}^{n}$ represents the transmit symbols and $m$ is uniformly distributed in $\mathcal{G}_n=\{1,2,..,G_n \}$. $\mathcal{C}_{n}=\{f(1),f(2),...,f(G_n)\}$ is the codebook. 

\textit{The decoder mapping}: The decoder mapping from the channel output $\BY^{(n)}$ to the message can be represented by
\begin{equation}
g:\mathbb{C}^{N\times n} \rightarrow \mathcal{G}_{n} \cup \{e\}.
\end{equation}
The mapping $g$ gives the decision of $\hat{m}=g(\BY^{(n)})$, where $\BY^{(n)}=\BH f(m)+\sigma \BW^{(n)}$ denotes the received block. The decoder picks the transmitted message $m$ if it is correctly decoded otherwise an error $e$ occurs. Since $m$ is assumed to be uniformly distributed, the \textit{average error probability} for a code $\mathcal{C}_n$ with blocklength $n$, encoder $f$, decoder $g$ and the input $M_n$ is given by
\begin{equation}
P_e^{(n)}(\mathcal{C}_n)=\Prob(\hat{m} \neq m),
\end{equation}
where the evaluation involves the randomness of $\BH$, $\BW^{(n)}$, and $m\in \mathcal{G}_n$. Given a rate $R$, the OAEP is given by
\begin{equation}
\label{pro_e_ori}
\begin{aligned}
 P_e^{(n)}(R)=\inf_{\mathcal{C}_n:\mathrm{supp}(\mathcal{C}_n)\subseteq \mathcal{S}^{n}}
\left\{ P_e^{(n)}(\mathcal{C}_n) | \frac{1}{nM}\log(|\mathcal{C}_n|) \ge R \right\}.
\end{aligned}
\end{equation}
Here $R$ represents the per-antenna rate of each transmitted symbol. The main results of this paper are based on following three assumptions.

 \textbf{A.1.}  (Large system asssumption) $0<\lim\inf\limits_{M \ge 1} \frac{M}{L} \le \frac{M}{L}  \le \lim \sup\limits_{M \ge 1} \frac{M}{L} <\infty$, $0<\lim \inf\limits_{M \ge 1}  \frac{M}{N} \le \frac{M}{N}  \le \lim \sup\limits_{M \ge 1}  \frac{M}{N} <\infty$, $0<\lim\inf\limits_{M \ge 1}  \frac{M}{n} \le \frac{M}{n}  \le \lim \sup\limits_{M \ge 1} \frac{M}{n} <\infty$.

\textbf{A.2.} $ \limsup\limits_{N\ge 1} \| \FR \| <\infty$, $\limsup\limits_{N \ge 1} \| \BD \| <\infty$, $\limsup\limits_{N\ge 1} \|\FS \|$.

\textbf{A.3.} $\limsup\limits_{N\ge 1} \frac{1}{M}\Tr\FR>0  $, $\limsup\limits_{N\ge 1} \frac{1}{M}\Tr\BD>0 $, and $\limsup\limits_{N\ge 1} \frac{1}{M}\Tr\FS>0$.

 \textbf{A.1} assumes that $M$, $N$, $L$, and $n$ increase to infinity with the same pace, which is widely used in the large system analysis. We denote the ratios $\alpha=\frac{N}{M}$, $\beta=\frac{M}{L}$, and $\tau=\frac{n}{M}$. $M  \xrightarrow[]{\alpha,\beta, \tau}\infty$ represents the asymptotic regime where $n$, $N$, $M$, and $L$ grow to infinity with the fixed ratios $\alpha$, $\beta$, and $\tau$.~\textbf{A.2} and~\textbf{A.3} guarantee that the rank of the correlation matrices is not ignorable compared to the system dimensions~\cite{zhang2022outage}.

\subsection{Optimal Average Error Probability}

In this work, we consider the rate $R$ within $\BO(\frac{1}{\sqrt{Mn}})$ of $\E C(\sigma^2)$, i.e., $R=\E C(\sigma^2)+\frac{r}{\sqrt{Mn}}$, where $r$ is called the~\textit{second-order coding rate}~\cite{hayashi2009information,hoydis2015second}. Given a second-order coding rate $r$, the OAEP for the concerned system with $M$ transmit antennas and blockength $n$ is given by~\cite{hoydis2015second}
\begin{equation}
\label{def_oaep}
\begin{aligned}
&\Prob_{e}(r| \alpha, \beta, \tau)
{=}\inf\limits_{ \{\mathcal{C}_n:\mathrm{supp}(\mathcal{C}_n)\subseteq \mathcal{S}^{n}\}_{n=1}^{\infty}}
\{ \limsup_{n  \xrightarrow[]{\alpha,\beta, \tau}\infty} P_e^{(n)}(\mathcal{C}_n)| 
\\
&\times
\liminf_{n  \xrightarrow[]{\alpha,\beta, \tau}\infty}\sqrt{nM}(\frac{1}{nM}\log(|\mathcal{C}_n|)- \E C(\sigma^2))\ge r\},
\end{aligned}
\end{equation}
where $C(\sigma^2)=\frac{1}{M}\log\det(\bold{I}_{N}+\frac{1}{\sigma^2}\BH\BH^{H})$ denotes the per antenna capacity. From~(\ref{pro_e_ori}) and~(\ref{def_oaep}), we can observe that for $r=\BO(1)$, the rate $R=\E C(\sigma^2) +\frac{r}{\sqrt{Mn}}$ is a $\BO(\frac{1}{\sqrt{Mn}})$ perturbation around $\E C(\sigma^2)$.

Unfortunately, deriving the closed-form expression of the OAEP for arbitrary systems dimensions $M$, $L$, $N$, and coding rate $R$ is very challenging. To handle the challenge brought by the complex structure of the IRS-aided MIMO channels, we will back off from the infinity by assuming that assumptions~\textbf{A.1}-\textbf{A.3} hold true to obtain the explicit evaluation for the OAEP. This asymptotic regime has been widely used in evaluating the performance of large-scale MIMO systems~\cite{hachem2008new,zhang2022asymptotic} and the strikingly simple expressions for the asymptotic performance have also been validated to be accurate even for the small-dimension systems. Before we proceed, we introduce the MID, which is important for deriving the bounds for the OAEP. Specifically, the MID of the concerned system can be given by 
\begin{equation}
\label{mid_exp}
I_{N,L,M}^{(n)}(\sigma^2)\overset{\bigtriangleup}{=}C(\sigma^2)+D(\sigma^2),
\end{equation}
where
\begin{equation}
\begin{aligned}
C(\sigma^2)&=\frac{1}{M}\log\det(\bold{I}_{N}+\frac{1}{\sigma^2}\BH\BH^{H}),
\\
D(\sigma^2)&=\frac{1}{Mn}\Tr[\BH\BH^{H}+\sigma^2\bold{I}_{N})^{-1}(\BH\BC^{(n)}+\sigma \BW^{(n)})
\\
& \times 
(\BH\BC^{(n)}+\sigma \BW^{(n)})^{H}  
-\BW^{(n)}(\BW^{(n)})^{H}].
\end{aligned}
\end{equation}
The following lemma indicates that the OAEP can be bounded. 
\begin{lemma} \cite[Eq. (77) and Eq. (89)]{hoydis2015second}
\label{bnd_err}
(Bounds for the OAEP) The upper and lower bounds of the OAEP are given by 
\begin{equation}
\mathbb{L}(r|\alpha, \beta, \tau)\le \Prob(r|\alpha, \beta, \tau ) \le \mathbb{U}(r|\alpha, \beta, \tau),
\end{equation}
where 
\begin{subequations}
\begin{align}
&\mathbb{U}(r|\alpha, \beta, \tau )=\lim_{\zeta  \downarrow 0} \limsup\limits_{ N \xrightarrow[]{\alpha, \beta,\tau} \infty} \nonumber
\\
&
 \Prob[\sqrt{Mn}(I^{(n)}_{N,L,M}(\sigma^2)-\overline{C}(\sigma^2))\le r+\zeta  ]\label{upp_bound},
\\
&\mathbb{L}(r|\alpha, \beta, \tau )=\inf\limits_{\{\Prob({\BC^{(n+1)})}\in \mathrm{P}(\mathcal{S}_{=}^{n+1})\}_{n=1}^{\infty}}
\lim\limits_{\zeta  \downarrow 0}
 \limsup\limits_{N \xrightarrow[]{\alpha, \beta,\tau} \infty} \nonumber
 \\
 &
 \Prob[\sqrt{Mn}(I^{(n+1)}_{N,L,M}(\sigma^2)-\overline{C}(\sigma^2))\le r-\zeta  ],
\end{align}
\end{subequations}
where $I^{(n)}_{N,L,M}$ is given in~(\ref{mid_exp}) and~(\ref{upp_bound}) is induced by the normalized Gaussian input $\BC^{(n)}\in \mathbb{C}^{M\times n}=\widetilde{\bold{C}}^{(n)}\left(\frac{1}{Mn}\Tr (\widetilde{\bold{C}}^{(n)}\widetilde{\bold{C}}^{(n),H})\right)^{-\frac{1}{2}}$, with $\widetilde{\bold{C}}^{(n)}\in \mathbb{C}^{M\times n}$ representing an i.i.d. Gaussian matrix.
\end{lemma}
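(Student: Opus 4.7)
The plan is to use the information spectrum approach of Verdú--Han and Hayashi, adapted to the IRS--aided MIMO channel exactly as in \cite{hoydis2015second}. The central observation is that both bounds can be expressed in terms of the tail behaviour of the MID $I^{(n)}_{N,L,M}(\sigma^2)$ defined in (\ref{mid_exp}), because the per-codeword information density $i(\BC^{(n)};\BY^{(n)})$ naturally decomposes into the log-determinant term $C(\sigma^2)$ and the residual noise/correlation term $D(\sigma^2)$. The upper bound will follow from a Feinstein/DT-type achievability argument with an isotropic Gaussian codebook on the sphere, while the lower bound will come from a Verdú--Han converse combined with Shannon's trick of absorbing the maximal-energy constraint into a sphere in one extra dimension.

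For the upper bound, I would draw the codewords i.i.d. from the normalized Gaussian ensemble sitting on $\mathcal{S}^{n}_{=}\subseteq \mathcal{S}^{n}$, and use a threshold decoder that accepts $m$ iff $i(\BC_{m}^{(n)};\BY^{(n)})>\log G_{n}+\gamma$. The standard argument (averaging over the random codebook, expurgation) then yields
\begin{equation*}
P_{e}^{(n)} \le \Prob\!\left[\tfrac{1}{Mn}i(\BC^{(n)};\BY^{(n)}) \le \tfrac{1}{Mn}\log G_{n}+\tfrac{\gamma}{Mn}\right]+e^{-\gamma}.
\end{equation*}
Plugging in $\tfrac{1}{Mn}\log G_{n}=\overline{C}(\sigma^{2})+\tfrac{r+\zeta}{\sqrt{Mn}}$ with $\gamma=\tfrac{1}{2}\sqrt{Mn}\,\zeta$ makes $e^{-\gamma}\to 0$, and identifying the normalized information density with $I^{(n)}_{N,L,M}(\sigma^{2})$ gives exactly $\mathbb{U}(r|\alpha,\beta,\tau)$ after letting $\zeta\downarrow 0$.

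For the lower bound, I would apply a Verdú--Han converse: for every code supported in $\mathcal{S}^{n}$,
\begin{equation*}
P_{e}^{(n)} \ge \Prob\!\left[\tfrac{1}{Mn}i(\BC^{(n)};\BY^{(n)}) \le R-\tfrac{\gamma}{Mn}\right]-e^{-\gamma},
\end{equation*}
and then take infima over admissible codebook sequences. To pass from the maximal-energy ball $\mathcal{S}^{n}$ to the sphere $\mathcal{S}^{n+1}_{=}$, I would use Shannon's embedding: any codeword in $\mathbb{C}^{M\times n}$ obeying the maximal constraint can be extended by one extra channel use with the slack energy, producing a codeword on $\mathcal{S}^{n+1}_{=}$. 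Under this substitution the MID becomes $I^{(n+1)}_{N,L,M}(\sigma^{2})$ and the infimum runs over $\mathrm{P}(\mathcal{S}^{n+1}_{=})$, which matches the statement. Setting $R=\overline{C}(\sigma^{2})+\tfrac{r-\zeta}{\sqrt{Mn}}$ and $\gamma=\tfrac{1}{2}\sqrt{Mn}\,\zeta$ then produces $\mathbb{L}(r|\alpha,\beta,\tau)$.

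The main obstacle is coordinating the three limits ($\zeta\downarrow 0$, the large-system regime $N\xrightarrow[]{\alpha,\beta,\tau}\infty$, and the infimum over code sequences) while keeping the maximal- vs.\ equal-energy discrepancy under control. In particular, one has to verify that the extra dimension introduced by Shannon's embedding contributes an $o(1/\sqrt{Mn})$ perturbation to both the centring $\overline{C}(\sigma^{2})$ and the variance of the MID, so that the same asymptotic distribution governs both $I^{(n)}_{N,L,M}$ and $I^{(n+1)}_{N,L,M}$ to the relevant order. This is where assumption \textbf{A.1} (equal-pace growth of $M,N,L,n$) is essential, because it ensures that adding a single column to $\BC^{(n)}$ is genuinely negligible against the $\sqrt{Mn}$ fluctuation scale targeted by the CLT developed in the rest of the paper.
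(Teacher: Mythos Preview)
Your proposal is correct and follows exactly the information-spectrum route of \cite{hoydis2015second}, which is all the paper relies on here: Lemma~\ref{bnd_err} is not proved in the paper but simply quoted from \cite[Eq.~(77) and Eq.~(89)]{hoydis2015second}. Your sketch (Feinstein/DT achievability with the normalized Gaussian input on $\mathcal{S}^{n}_{=}$ for the upper bound, Verd\'u--Han converse plus the auxiliary-symbol embedding \cite[Lemma~39]{polyanskiy2010channel} for the lower bound, with $\gamma$ scaled as $\sqrt{Mn}\,\zeta$) reproduces precisely that argument, so there is nothing to add.
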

\begin{remark} 
\label{ana_sphc}
Lemma~\ref{bnd_err} is important because: 1. it bounds the OAEP with the maximal energy constraint $\mathcal{S}$ by that with the equal energy constraint. 2. it turns the evaluation of the OAEP to the investigation of the MID. The switch from the maximal power constraint to the equal power constraint follows by introducing an auxiliary symbol~\cite[Lemma 39]{polyanskiy2010channel}. 
\end{remark} 

\subsection{Problem Formulation}
By Lemma~\ref{bnd_err}, the characterization of the OAEP is converted to the investigation of the MID distribution, which will be achieved by setting up a CLT for the MID. The error probability minimization problem can be formulated as
 \begin{equation}
 \label{P1_exp}
  \begin{aligned}
\mathcal{P}1:~&\min_{\bold{\Phi}}~\Prob_e(r|\alpha, \beta,\tau),~s.t.
\\
& \bold{\Phi}=\diag\left(\phi_1,\phi_2,...,\phi_L \right),
\\
&|\phi_l|=1, l=1,2,...L.
 \end{aligned}
 \end{equation}
In the following, we will set up a CLT for the MID and propose the phase shift design algorithm to solve $\mathcal{P}1$.


\section{Bounds for OAEP and Optimization}

In this section, we will characterize the asymptotic distribution of the MID and give the upper and lower bounds for the OAEP in closed forms. Then, we will propose a gradient-based algorithm to minimize the derived lower bound. Before presenting the main results, we introduce the following notations:
\begin{equation}
\bold{S}=\RT_{IRS}\bold{\Phi}\FR_{IRS}\bold{\Phi}^{H}\RT_{IRS},
\end{equation}
 \begin{equation}
 \label{basic_eq1}
 \begin{aligned}
 \begin{cases}
   &\delta(z)=\frac{1}{L}\Tr\bold{R}\BG(z),
   \\
 &\omega(z)=\frac{\delta(z)}{M}\Tr \FS\BF(z), 
 \\
 &\xi(z)=\frac{1}{M}\Tr\BD\BG(z),
\end{cases}
  \end{aligned}
 \end{equation}
where
 \begin{equation}
\bold{G}(z)=\left(z \bold{I}_{N}+ \overline{\omega}(z) \BD  + \frac{M \omega(z) \overline{\omega}(z)\bold{R}}{L\delta(z)}   \right)^{-1},
\end{equation}
\begin{equation}
\BF(z)=\left( \bold{I}_{L}+\overline{\omega}(z)\delta(z) \FS   \right)^{-1},
\end{equation}
and $\overline{\omega}=\left( 1+\omega(z)+\xi(z) \right)^{-1}$. The resolvent matrix of the gram matrix $\BH\BH^{H}$ is defined as
\begin{equation}
\label{resol_def}
\BQ(z)=\left(z\bold{I}_{N}+\BH\BH^{H}\right)^{-1}.
\end{equation}
For ease of presentation, we summarize some important notations in Table~\ref{var_list} and omit $(z)$ for brevity.

 \begin{table}[!htbp]
\centering
\caption{Useful notations.}
\label{var_list}
\begin{tabular}{|cc|cc|}
\toprule
Notation& Value &  Notation & Value \\
\midrule
$\gamma_{R}$ & $\frac{1}{L}\Tr\FR\BG\FR\BG$
&
$\gamma_{R,I}$ & $\frac{1}{L}\Tr\FR\BG^2$
 \\
  $\gamma_{S}$ & $\frac{1}{M \delta^2}\Tr\FS^2\BF^2$
  &
    $\gamma_{S,I}$ & $\frac{1}{M \delta^2}\Tr\FS\BF^2$
    \\
  $\gamma_{D}$  & $\frac{1}{M}\Tr\BD\BG\BD\BG$ 
  & $\gamma_{R,D}$ &  $\frac{1}{L}\Tr  \FR\BG\BD\BG  $
  \\
    $\gamma_{D,I}$  & $\frac{1}{M}\Tr\BD\BG^2$ 
& $\gamma_{I}$ & $\frac{1}{L}\Tr\BG^2$
\\
    $\Delta_{S}$  & $1-\gamma_S\overline{\omega}^2$ 
  \\
  \bottomrule
\end{tabular}
\end{table}
Noticing that $\E I_{N,L,M}^{(n)}(\sigma^2)=\E C(z)$ (since $\E D(z)=0$), the approximation of the MID can be obtained by the following lemma.
 \begin{lemma} 
 \label{lemma_mean}
 Given assumptions~\textbf{A.1} to \textbf{A.3}, $C(z)$ can be approximated by
\begin{equation}
\label{mean_appro}
C(z) =  \overline{C}(z)+\BO(M^{-1}),
\end{equation}
where $0<z<\infty$ and
  \begin{equation}
\begin{aligned}
&  \overline{C}(z)=\frac{1}{M} (N\log(z^{-1})+ \log\det(\BG^{-1})+
  \\
  &
\log\det(\BF^{-1})
  +M\log(1+\xi+\omega)-2M\omega\overline{\omega} - M\xi\overline{\omega} ).
\end{aligned}
  \end{equation}
  There also holds true that, for any $\BM$ with bounded norm, i.e., $\limsup_{N\ge 1}\| \BM\|<\infty$,
  \begin{equation}
  \label{trace_appro}
 \begin{aligned}
&  \frac{\E\Tr\BM\BQ}{L} \!=\! \frac{\Tr\BM\BG}{L}+\BO(M^{-2}),~\frac{\E\Tr\FR\BQ}{L}\!=\!\delta+\BO(M^{-2}),
  \\
 &\frac{\E\Tr\BZ\BZ^{H}\BQ}{M}\!=\!\omega+\BO(M^{-2}),~\frac{\E\Tr\BD\BQ}{M}\!=\!\xi+\BO(M^{-2}).
  \end{aligned}
  \end{equation}
\end{lemma}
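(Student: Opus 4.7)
The plan is to establish the four trace approximations in~(\ref{trace_appro}) first, and then deduce the capacity approximation in~(\ref{mean_appro}) by integrating with respect to the spectral parameter $z$.

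\textbf{Trace approximations via Gaussian tools.} Since $\BX$, $\BY$, and $\BU$ are independent matrices with i.i.d. complex Gaussian entries, I would use Gaussian integration by parts (Stein's formula) together with the Poincar\'e--Nash inequality to control variances, as in~\cite{hachem2008new,zhang2022asymptotic}. Starting from the resolvent identity $z\BQ=\BI_{N}-\BH\BH^{H}\BQ$, I expand $\BH\BH^{H}$ into the direct term $\LD\BU\BU^{H}\LD$, the cascaded term $\LR\BX\RT_{IRS}\bold{\Phi}\LR_{IRS}\BY\BY^{H}\LR_{IRS}^{H}\bold{\Phi}^{H}\RT_{IRS}^{H}\BX^{H}\LR$, and the two cross terms. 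Applying IBP in $\BU$ handles the direct block and shows that the cross terms vanish in expectation conditional on $(\BX,\BY)$; this produces contributions involving $\xi=M^{-1}\E\Tr\BD\BQ$. Iterating IBP in $\BY$ with $\BX$ frozen, and then in $\BX$, brings in $\omega=M^{-1}\E\Tr\BZ\BZ^{H}\BQ$ (with $\BZ$ the cascaded block) and $\delta=L^{-1}\E\Tr\FR\BQ$ and, after closing the system, yields the coupled fixed-point relations~(\ref{basic_eq1}) together with the deterministic equivalents $\BG$ and $\BF$. The Poincar\'e--Nash inequality bounds the variance of each normalized trace by $\BO(M^{-1})$, so replacing random traces by their expectations in the IBP identities introduces only a $\BO(M^{-1})$ error at the Stieltjes-transform level; a second IBP pass shows that this leading error is itself mean zero with respect to the deterministic system, which upgrades the bound to the sharper $\BO(M^{-2})$ claimed in~(\ref{trace_appro}). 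The general $\BM$-trace identity follows by plugging $\BM$ into the same IBP scheme, since none of the algebraic manipulations above used the specific choice $\BM=\FR,\BD$ beyond the uniformly bounded spectral norm granted by~\textbf{A.2}.

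\textbf{From trace to capacity.} For~(\ref{mean_appro}), the key observation is that
\begin{equation}
C(z)=\int_{z}^{\infty}\!\left[\frac{N}{Mt}-\frac{1}{M}\Tr\BQ(t)\right]dt,
\end{equation}
which follows from $\partial_{z}\log\det(z\BI_{N}+\BH\BH^{H})=\Tr\BQ(z)$ and $C(z)\!\to\!0$ as $z\!\to\!\infty$. Taking expectation and substituting $\frac{1}{M}\E\Tr\BQ(t)=\frac{1}{M}\Tr\BG(t)+\BO(M^{-2})$ gives $\E C(z)=\int_{z}^{\infty}[\frac{N}{Mt}-\frac{1}{M}\Tr\BG(t)]\,dt+\BO(M^{-1})$. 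It then remains to check that the closed-form $\overline{C}(z)$ in the statement satisfies the same boundary condition at infinity and the identity $\partial_{z}\overline{C}(z)=-\frac{N}{Mz}+\frac{1}{M}\Tr\BG(z)$. This is a direct but careful computation: differentiate each logarithm, use the fixed-point equations in~(\ref{basic_eq1}) to express the implicit derivatives $\delta'(z),\omega'(z),\xi'(z)$, and observe that the terms involving these derivatives cancel precisely because $(\delta,\omega,\xi,\overline{\omega})$ are stationary points of the functional $\overline{C}(z)$, exactly as in the Rayleigh and Rayleigh-product precedents~\cite{hachem2008new,zhang2022asymptotic}.

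\textbf{Principal obstacle.} The hardest part is the iterated IBP for the product structure $\BX\cdot\BY$: it couples the three equations for $\delta$, $\omega$, and $\xi$ together with the auxiliary quantities listed in Table~\ref{var_list}, and many cross terms appear at intermediate stages. Showing that these cross terms close cleanly into the asserted fixed-point system, \emph{and} that the leading $\BO(M^{-1})$ remainder vanishes in expectation so that the effective error collapses to $\BO(M^{-2})$, is where the bulk of the technical work lies. The $\BO(M^{-2})$ sharpness is essential downstream, because the CLT for the MID at scale $(Mn)^{-1/2}$ requires the mean approximation to be accurate at a finer scale than the fluctuation scale, so any loss to $\BO(M^{-1})$ here would propagate into the bounds $\mathbb{U}$ and $\mathbb{L}$ of Lemma~\ref{bnd_err} and destroy the second-order calibration needed for the OAEP.
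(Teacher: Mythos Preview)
Your proposal is correct and follows the same route the paper indicates: the paper does not give a self-contained proof of Lemma~\ref{lemma_mean} but only Remark~2, which states that the formula for $\overline{C}(z)$ coincides with~\cite[Proposition~1]{moustakas2023reconfigurable} and that the $\BO(M^{-1})$ rate (and the trace identities) are obtained via the Gaussian tools of~\cite{hachem2008new,zhang2022asymptotic}. Your sketch---Stein/IBP plus Nash--Poincar\'e to close the fixed-point system~(\ref{basic_eq1}) and obtain the $\BO(M^{-2})$ trace accuracy, followed by the integral representation $C(z)=\int_{z}^{\infty}[\frac{N}{Mt}-\frac{1}{M}\Tr\BQ(t)]\,dt$ (which the paper itself invokes in Appendix~\ref{proof_clt_the})---is exactly that program, so there is nothing to correct or contrast.
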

\begin{remark} The approximation in~(\ref{mean_appro}) coincides with that in~\cite[Proposition 1]{moustakas2023reconfigurable} and the convergence rate $\BO(M^{-1})$ can be obtained by the Gaussian tools in~\cite{zhang2022asymptotic}. Furthermore, $\BG$ is a good approximation for $\E\BQ$ in trace operations.
 \end{remark}
By Lemma~\ref{lemma_mean}, we have obtained the evaluation for $\E I_{N,M,L}(z)=\overline{C}(z)+\BO(M^{-2})$. Next, we will investigate the asymptotic distribution of $I_{N,M,L}(z)$.
 \begin{figure*}
 \vspace{-0.7cm}
\begin{equation}
\label{def_pis_qs}
 \begin{aligned}
\bold{\Pi}&=
\begin{bmatrix}
1-\frac{M\gamma_{S}\overline{\omega}^2\gamma_{R}}{L\delta^2} & -\frac{M\overline{\omega}^2 \gamma_{R}\gamma_{S,I} }{L\delta^2\Delta_S}- \frac{\overline{\omega}^2\gamma_{R,D}}{\Delta_S}
\\
  -\frac{\gamma_S\overline{\omega}^2(\frac{M\gamma_{S,I}\gamma_R}{L\delta^2}+\gamma_{R,D})}{\delta^2} &1-\frac{\overline{\omega}^2 (\frac{M\gamma_{S,I}\gamma_{R}}{L\delta^2}+ \gamma_{R,D})\gamma_{S,I} }{\delta^2\Delta_S}- \frac{\overline{\omega}^2(\frac{\gamma_{S,I}\gamma_{R,D}}{\delta^2}+\gamma_{D})}{\Delta_S}
\end{bmatrix},~~
  \bold{q}_{I}= \begin{bmatrix}
\gamma_{R,I}
&
\frac{\gamma_{S,I}\gamma_{R,I}}{\delta^2}+\gamma_{D,I}
 \end{bmatrix}^{T},
\\
    \bold{p}_{I}&= \begin{bmatrix}
\frac{M\overline{\omega}^2 \gamma_{S} \gamma_{R,I} }{L\delta^2} &
\frac{M\overline{\omega}^2 \gamma_{R,I}\gamma_{S,I} }{L\delta^2\Delta_S}+ \frac{M\overline{\omega}^2\gamma_{D,I}}{L\Delta_S}
 \end{bmatrix}^{T},~~
     \bold{q}_{RD}= \begin{bmatrix}
\frac{M\gamma_{S,I}\gamma_R}{L\delta^2} + \gamma_{R,D}
&
\frac{M\gamma_{S,I}^2\gamma_{R}}{L\delta^4}+
\frac{2\gamma_{S,I}\gamma_{R,D}}{\delta^2}+\gamma_{D}
 \end{bmatrix}^{T}
 \end{aligned}
 \end{equation}
 \hrulefill
 \vspace{-0.3cm}
\end{figure*}

\subsection{CLT for the MID}
With the equal energy constraint in~(\ref{sph_cons}), the asymptotic distribution of the MID is given in the following theorem. 
\begin{theorem}
\label{clt_the}
 (CLT for the MID)
Given assumptions~\textbf{A.1} to~\textbf{A.3} and a sequence of $\BC^{(n)}$, the distribution of the MID converges to a Gaussian distribution, i.e.,
\begin{equation}
\sqrt{\frac{{Mn}}{V_n}}(I_{N,L,M}^{(n)}(\sigma^2)-\overline{C}(\sigma^2))  \xrightarrow[{N  \xrightarrow[]{\alpha, \beta,\tau}\infty}]{\mathcal{D}}  \mathcal{N}(0,1),
\end{equation}
where $V_n$ is given by
\begin{equation}
\begin{aligned}
\label{clt_var}
V_n&=-\tau\log(\Xi)+\alpha-\frac{\sigma^4 (\gamma_{I}+\bold{p}_{I}^{T}\bold{\Pi}^{-1}\bold{q}_{I})}{\beta}
\\
&
+\frac{\tau \overline{\omega}^4\Tr\BA_{n}^2}{M}(\frac{ [\bold{\Pi}^{-1}]_{(2)}\bold{q}_{RD}  }{\Delta_S^2}+\frac{\gamma_S}{\Delta_S}),
\end{aligned}
\end{equation}

with 
\begin{equation}
\label{def_A}
 \BA_{n}=\bold{I}_M-\frac{1}{n}\BC^{(n)}(\BC^{(n)})^H,
 \end{equation}
 \begin{equation}
\Xi=\det(\bold{\Pi})\Delta_{S}.
\end{equation}
The parameters $\bold{\Pi}$, $\bold{p}_{I}$, $\bold{q}_{I}$, and $\bold{p}_{RD}$ are given in~(\ref{def_pis_qs}) at the top of the next page and $\gamma_{(\cdot)}$ is given in Table~\ref{var_list}. $[\bold{\Pi}^{-1}]_{(2)}$ denotes the second row of $[\bold{\Pi}^{-1}]$.
\end{theorem}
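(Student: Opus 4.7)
The plan is to establish the CLT by showing pointwise convergence of the characteristic function
\[
\Psi_n(u) := \E \exp\!\bigl(\Iunit u\sqrt{Mn}\,(I^{(n)}_{N,L,M}(\sigma^2) - \overline{C}(\sigma^2))\bigr)
\]
to $\exp(-u^2 V_n/2)$ and then invoking L\'evy's continuity theorem. Following the Gaussian-tool strategy used for analogous log-det CLTs in~\cite{hachem2008new,zhang2022asymptotic}, I would differentiate $\Psi_n(u)$ in $u$, bring the centered MID inside the expectation, and iteratively apply the Gaussian integration-by-parts (IBP) identity together with the Poincar\'e-Nash variance inequality on the entries of the Gaussian matrices $\BX$, $\BY$, $\BU$, and $\BW^{(n)}$, reducing $\Psi_n'(u)$ to an ODE of the form $\Psi'_n(u) = -u V_n \Psi_n(u) + o(1)$ whose solution gives the desired Gaussian limit with variance $V_n$.

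The first step is to decompose $I^{(n)}_{N,L,M} - \overline{C} = (C - \overline{C}) + D$, using $\E D = 0$ from Lemma~\ref{lemma_mean}. For the log-det part $C(\sigma^2)$, I would differentiate through $\log\det$ via the resolvent $\BQ(\sigma^2)$ in~(\ref{resol_def}), apply IBP separately on $\BX, \BY$ (cascaded channel) and on $\BU$ (direct channel), and bookkeep the leading contributions via the deterministic equivalents $\delta, \omega, \xi$ and matrices $\BG, \BF$ from~(\ref{basic_eq1}). The interaction between the cascaded and direct components forces a coupled linear system in the auxiliary trace variables whose coefficient matrix is $\bold{\Pi}$; inversion of $\bold{\Pi}$ produces the $-\tau\log(\Xi)$ contribution to $V_n$, with $\Xi = \det(\bold{\Pi})\Delta_S$ arising as the Jacobian of the corresponding fixed-point map integrated along the spectral parameter. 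For $D(\sigma^2)$, I would first condition on $\BH$: $D$ is then quadratic in the Gaussian noise $\BW^{(n)}$ with $\BC^{(n)}$ acting as a deterministic parameter, so its conditional variance reduces to traces involving $\BQ$, $\BC^{(n)}(\BC^{(n)})^H$, and $\sigma^2\BI_N$. A second round of IBP over the channel randomness then produces the $\alpha$, $-\sigma^4(\gamma_I+\bold{p}_I^T\bold{\Pi}^{-1}\bold{q}_I)/\beta$, and $\Tr\BA_n^2/M$ terms in~(\ref{clt_var}). The joint covariance between $C$ and $D$ enters through cross-terms that IBP routes into the same $\bold{\Pi}$-linear system.

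The principal technical obstacle is handling the cascaded structure $\LR\BX\RT_{IRS}\bold{\Phi}\LR_{IRS}\BY$ coexisting with the direct term $\LD\BU$: each IBP on an entry of $\BX$ (or $\BY$) reinjects the partner matrix through the resolvent, so one must iterate the IBP and carefully resolve the induced $2\times 2$ fixed-point relations — this is precisely where $\bold{\Pi}$ emerges. Secondary obstacles are (i) controlling the $\BO(M^{-1})$ remainders uniformly for $u$ in compact intervals so that they may be dropped inside the ODE, which is achieved via Poincar\'e-Nash variance bounds on trace functionals as in~\cite{zhang2022asymptotic}, and (ii) isolating the codeword-dependent contribution: using the equal-energy constraint $\Tr\BC^{(n)}(\BC^{(n)})^H = Mn$ (equivalently $\Tr\BA_n = 0$), the quadratic dependence on $\BC^{(n)}$ collapses onto $\Tr\BA_n^2/M$ with the coefficient shown in~(\ref{clt_var}), and no lower-order moments of $\BC^{(n)}$ survive in the limit.
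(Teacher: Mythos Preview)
Your proposal is correct and follows essentially the same strategy as the paper's proof: differentiate the characteristic function in $u$, handle the noise $\BW^{(n)}$ first (your conditioning on $\BH$ is the paper's ``Step~1: expectation over $\BW$''), then apply Gaussian integration-by-parts together with Poincar\'e--Nash variance control over the channel matrices $\BX,\BY,\BU$, arriving at an ODE whose solution is Gaussian with variance $V_n$. The paper likewise uses the integral representation $\log\det(\sigma^2\BI_N+\BH\BH^H)=\int_{\sigma^2}^\infty(\tfrac{N}{z}-\Tr\BQ(z))\,\mathrm{d}z$, exploits $\Tr\BA_n=0$, and obtains the $2\times2$ coupled linear system with coefficient matrix $\bold{\Pi}$ exactly as you anticipate.
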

\begin{proof}
The proof of Theorem~\ref{clt_the} is sketched in Appendix~\ref{proof_clt_the}.
\end{proof}
\begin{remark} When $\FS=\bold{0}$ and $\BD=\bold{I}_{N}$, Theorem~\ref{clt_the} degenerates to the result for single-hop Rayleigh MIMO channels~\cite[Theorem 2]{hoydis2015second}. When there is no channel correlation and direct link, i.e., $\FR=\bold{I}_{N}$ and $\FR_{IRS}=\FT_{IRS}=\bold{I}_{L}$ and $\BD=\bold{0}$, Theorem~\ref{clt_the} degenerates to the result for Rayleigh-product MIMO channels~\cite[Theorem 2]{zhang2022second}.
\end{remark}

Theorem~\ref{clt_the} indicates that the asymptotic distribution of the MID is a Gaussian distribution whose mean and variance are determined by the system of equations in~(\ref{basic_eq1}) and variables in Table~\ref{var_list}. By Lemma~\ref{bnd_err}, we can obtain the upper and lower bounds for the OAEP.

\subsection{Upper and Lower Bounds for the OAEP}
\begin{theorem} \label{the_oaep}
(Bounds for the OAEP)
 Given the second-order coding rate $r$, the OAEP of the IRS-aided MIMO system, $\Prob_e(r|\tau,\alpha, \beta)$ , is bounded by
\begin{equation}
\label{low_bnd_exp}
\Prob_e(r|\alpha, \beta,\tau) \ge 
\begin{cases}
\Phi(\frac{r}{\sqrt{V_{-}}}),~~ r\le 0,
\\
\frac{1}{2},~~ r> 0,
\end{cases}
\end{equation}
\begin{equation}
\Prob_e(r|\alpha, \beta,\tau)  \le \Phi(\frac{r}{\sqrt{V_{+}}}),
\end{equation}
where
\begin{equation}
\begin{aligned}
\label{var_upp_low}
V_{-}&=-\tau\log(\Xi)+\alpha-\frac{\sigma^4 (\gamma_{I}+\bold{p}_{I}^{T}\bold{\Pi}^{-1}\bold{q}_{I})}{\beta},
\\
V_{+}&=V_{-}+\overline{\omega}^4(\frac{ [\bold{\Pi}^{-1}]_{(2)}\bold{q}_{RD}  }{\Delta_S^2}+\frac{\gamma_S}{\Delta_S}).
\end{aligned}
\end{equation}
\end{theorem}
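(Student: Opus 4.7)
The plan is to derive both bounds directly from Lemma~\ref{bnd_err} together with the CLT in Theorem~\ref{clt_the}, after identifying how the asymptotic variance $V_n$ depends on the admissible input distribution. Since Lemma~\ref{bnd_err} already reduces the OAEP to a CDF of the normalized MID, the bulk of the work is to characterize how the matrix $\BA_n=\bold{I}_M-\frac{1}{n}\BC^{(n)}(\BC^{(n)})^H$ appearing in (\ref{clt_var}) behaves for the two input classes, and in particular to identify the extreme values of $\frac{1}{M}\Tr\BA_n^2$ over $\mathcal{S}_=^{n+1}$.

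For the upper bound I would substitute the normalized Gaussian codebook $\BC^{(n)}=\widetilde{\BC}^{(n)}\bigl(\frac{1}{Mn}\Tr\widetilde{\BC}^{(n)}\widetilde{\BC}^{(n),H}\bigr)^{-1/2}$ prescribed by Lemma~\ref{bnd_err}. The equal-energy constraint forces $\frac{1}{M}\Tr\BA_n=0$ almost surely, so $\frac{1}{M}\Tr\BA_n^2=\frac{1}{M}\Tr\bigl(\frac{1}{n}\BC^{(n)}(\BC^{(n)})^H\bigr)^2-1$. Using the Marchenko--Pastur law for complex Wishart matrices with aspect ratio $M/n=1/\tau$, together with the fact that the normalizing scalar converges to $1$, this quantity converges almost surely to $1/\tau$. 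Plugging into (\ref{clt_var}) gives $\frac{\tau\overline{\omega}^4}{M}\Tr\BA_n^2\cdot\bigl(\cdot\bigr)\to\overline{\omega}^4\bigl(\cdot\bigr)$, i.e.\ $V_n\to V_+$. Theorem~\ref{clt_the} and continuity of $\Phi$ then yield $\Prob[\sqrt{Mn}(I_{N,L,M}^{(n)}-\overline{C})\le r+\zeta]\to\Phi\bigl((r+\zeta)/\sqrt{V_+}\bigr)$, and sending $\zeta\downarrow 0$ delivers $\mathbb{U}(r|\alpha,\beta,\tau)=\Phi(r/\sqrt{V_+})$.

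For the lower bound, any sequence with $\mathrm{supp}(\mathcal{C}_{n+1})\subseteq\mathcal{S}_=^{n+1}$ satisfies $\frac{1}{M}\Tr\BA_n=0$ deterministically, so $\frac{1}{M}\Tr\BA_n^2\ge 0$ with equality iff $\frac{1}{n}\BC^{(n)}(\BC^{(n)})^H=\bold{I}_M$ (isotropic input, realizable e.g.\ by a scaled unitary codebook). Assuming the prefactor $\overline{\omega}^4\bigl(\frac{[\bold{\Pi}^{-1}]_{(2)}\bold{q}_{RD}}{\Delta_S^2}+\frac{\gamma_S}{\Delta_S}\bigr)$ is positive --- which I would verify from the positivity of $\overline{\omega}$, $\delta$, $\Delta_S$ and of the relevant Schur complements of $\bold{\Pi}$ implied by (\ref{basic_eq1}) --- it follows that $V_n\ge V_-$, with the minimum attained. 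If $r\le 0$, then $r-\zeta<0$ eventually and monotonicity of $\Phi$ forces the infimum of $\Phi\bigl((r-\zeta)/\sqrt{V_n}\bigr)$ to be reached at $V_n=V_-$, giving $\Phi(r/\sqrt{V_-})$ after $\zeta\downarrow 0$. If $r>0$, I would take a nearly rank-deficient admissible input concentrating energy on a single direction, which drives $\frac{1}{M}\Tr\BA_n^2$ to $+\infty$ and hence $V_n\to\infty$, so $\Phi\bigl((r-\zeta)/\sqrt{V_n}\bigr)\to\Phi(0)=\tfrac{1}{2}$.

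The hardest step I expect is the careful interchange of $\inf$, $\limsup$ and $\zeta\downarrow 0$ in the definition of $\mathbb{L}$, since Theorem~\ref{clt_the} is stated for a fixed sequence $\BC^{(n)}$ and one must argue that the minimizing variance $V_-$ (resp.\ the diverging variance in the case $r>0$) can actually be realized by a sequence of codebooks on $\mathcal{S}_=^{n+1}$ without losing the Gaussian limit uniformly. The accompanying algebraic step --- showing that the coefficient of $\Tr\BA_n^2/M$ in (\ref{clt_var}) is strictly positive and that $V_->0$ purely from the fixed-point system (\ref{basic_eq1}) --- should follow from the probabilistic non-negativity of $\gamma_R,\gamma_{R,I},\gamma_S,\gamma_{S,I},\gamma_D,\gamma_{D,I},\gamma_{R,D}$ and the block structure of $\bold{\Pi}$, but the bookkeeping will require some care.
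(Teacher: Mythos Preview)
Your proposal is correct and follows the same route the paper takes (by reference to \cite[Theorem~3]{zhang2022second} and the analogous argument in \cite{hoydis2015second}): plug the CLT of Theorem~\ref{clt_the} into Lemma~\ref{bnd_err}, compute $\tfrac{1}{M}\Tr\BA_n^2\to 1/\tau$ for the normalized Gaussian input via the Marchenko--Pastur second moment to obtain $V_+$, and minimize/maximize $V_n$ over the sphere $\mathcal{S}_=^{n+1}$ using $\tfrac{1}{M}\Tr\BA_n^2\ge 0$ to obtain $V_-$ and the $1/2$ bound. The caveats you flag (positivity of the coefficient of $\Tr\BA_n^2/M$, validity of the CLT when $\Tr\BA_n^2/M$ is driven large for $r>0$, and the $\inf/\limsup/\zeta\!\downarrow\!0$ interchange) are exactly the technical points handled in the cited references, so your plan matches the paper's intended argument.
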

\begin{proof} The proof can be obtained by similar steps in~\cite[Theorem 3]{zhang2022second} and is omitted here.
\end{proof}


\subsection{Lower Bound Minimization}
Without the closed-form expression for the packet error rate, $\mathcal{P}1$ is difficult to optimize. Thus, we first reformulate $\mathcal{P}1$ to minimize the lower bound of the OAEP in~(\ref{low_bnd_exp}) as
 \begin{equation}
  \begin{aligned}
\mathcal{P}2:~&K(\bold{\Phi})=\min_{\bold{\Phi}}~ \Phi(\frac{r}{\sqrt{V_{-}}}),~s.t.
\\
& \bold{\Phi}=\diag\left(\phi_1,\phi_2,...,\phi_L \right),
\\
&|\phi_l|=1, l=1,2,...L.
 \end{aligned}
 \end{equation}
 In order to tackle the non-convexity of the phase shift constraint, we use the Armijo-Goldstein (AG) line search method to find a decrease of the objective function in the negative gradient direction. The detailed gradient descent algorithm is given in Algorithm~\ref{gra_alg}. Similar to~\cite{zhang2022outage}, the gradient can be computed by the chain rule and the derivatives of $\delta$, $\omega$, $\xi$ can be obtained by taking derivative on both sides of~(\ref{basic_eq1}), which are omitted here.
\begin{algorithm} 
\caption{Lower Bound Minimization Algorithm for $\mathcal{P}$2} 
\label{gra_alg} 
\begin{algorithmic}[1] 
\REQUIRE  $\bm{\theta}^{\left(0 \right)}$, scaling factor $0<c<1$, control parameter $0<\kappa<1$, and initial searching stepsize $\lambda_{0}$.
Set $s=0$. 
\REPEAT
\STATE Compute the gradient
$$\nabla_{\bm{\theta}}K(\bold{\Phi})=  (\frac{\partial K(\bold{\Phi})}{\partial \theta_{1}}, \frac{\partial K(\bold{\Phi})}{\partial \theta_{2}},..., 
\frac{\partial K(\bold{\Phi})}{\partial \theta_{L}})^{T},$$ 
and its direction $\bold{g}^{(s)}=\frac{\nabla_{\bm{\theta}}K(\bold{\Phi})}{\|\nabla_{\bm{\theta}}K(\bold{\Phi}) \|}$.
\STATE $\lambda \leftarrow\lambda_{0}$.
\WHILE{$K(\bold{\Phi}^{(s)})-K( \diag[\exp(\jmath\bm{\theta}^{(s)}-\lambda \jmath \bold{g}^{(s)}) ])<\lambda\kappa \| \nabla_{\bm{\theta}}K(\bold{\Phi}^{(s)})\|$} 
\STATE  $\lambda \leftarrow c\lambda$.
\ENDWHILE 
	\STATE $\bm{\theta}^{(s+1)} \leftarrow \bm{\bold{\theta}}^{(s)}-\lambda \bold{g}^{(s)}$.
	\STATE $\bold{\Phi}^{(s+1)} \leftarrow \diag[\exp(\jmath\bm{\theta}^{(s+1)})]$.
\STATE $s \leftarrow  s+1$.
\UNTIL Convergence.
\ENSURE  $\bm{\theta}^{(s)}, \bold{\Phi}^{(s)}$.
\end{algorithmic}
\end{algorithm}


\section{Numerical Simulations}
\label{sec_simu}
In this section, we validate the theoretical results by numerical simulations.
\subsection{Simulation Settings}
We consider an IRS-aided MIMO system, where the distances of the BS-user, BS-IRS, and IRS-user channels are set to be $d_{BS-U}=50$ m, $d_{BS-IRS}=40$ m, and $d_{IRS-U}=50$ m, respectively. The path loss of the three channels are given by
\begin{equation}
\begin{aligned}
\rho_{BS-IRS}&=\frac{C_{1}}{d_{BS-IRS}^{\alpha_1}},~\rho_{IRS-U}=\frac{C_{2}}{d_{IRS-U}^{\alpha_1}}, 
\\
\rho_{BS-U}&=\frac{C_3}{d_{IRS-U}^{\alpha_2}}.
\end{aligned}
\end{equation}
Here $\alpha_i,i=1,2$ represents the path loss exponent and $C_i,i=1,2,3$ represents the $1$ meter reference path loss. The parameters are set as $\alpha_1=2.2$, $\alpha_2=3.67$ , $C_1=10^{-2.305}$, $C_2=0.05$, $C_3=10^{-2.595}$~\cite{kammoun2020asymptotic}, and $\sigma^2=-80$ dBm. The correlation matrix for the uniform linear array is constructed by~\cite{yong2005three,zhang2022secrecy},
\begin{equation}
\begin{aligned}
&[\BC(d_r,\eta,\delta,G)]_{m,n}
\\
&
=\int_{-180}^{180}\frac{1}{\sqrt{2\pi\delta^2}}e^{\jmath \frac{2\pi}{\lambda} d_r(m-n)\sin(\frac{\pi\phi}{180})-\frac{(\phi-\eta)^2}{2\delta^2} }\mathrm{d}\phi,
\end{aligned}
\end{equation}
where $d_r$ represents the relative antenna spacing (in wavelengths), which is set to be $0.5$, $m$ and $n$ denotes the indices of antennas, $\eta$ and $\delta^2$ represent the mean angle and the mean-square angle spreads in degree, respectively, and $G$ is the number of antennas.

In Fig.~\ref{simu_clt},  the Gaussianity is validated by comparing the empirical probability density function (PDF) of the normalized MID and the PDF of the standard Gaussian distribution. The parameters are set as $\FR=[\BC(0.5,10,10,16)]$, $\BD=[\BC(0.5,5,5,16)]$, $\FT_{IRS}=[\BC(0.5,0,5,32)]$, $\FR_{IRS}=[\BC(0.5,15,15,32)]$, $n=36$ and the transmit power $P=15$ dbm W. The number of the Monte-Carlo realizations is $10^{6}$ and $\BC^{(n)}=\bold{1}_{M}\bold{1}_{n}^{T}$. It can be observed that the histogram of the normalized MID fits the PDF of the standard Gaussian distribution well, which also validates that the closed-form expressions for the mean and variance in Theorem~\ref{clt_the} are accurate.
\begin{figure}[t!]
\centering\includegraphics[width=0.45\textwidth]{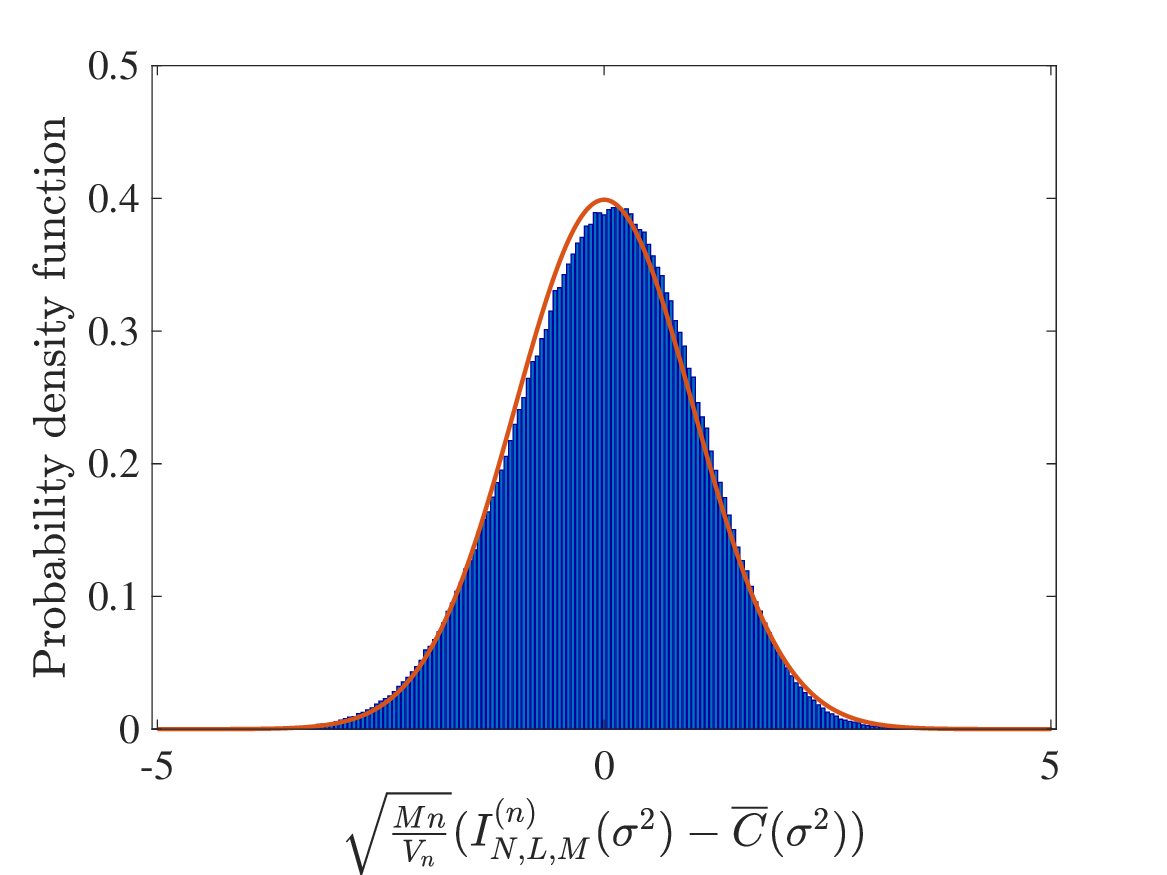}
\caption{Fitness of the CLT.}
\label{simu_clt}
\end{figure}

With the same settings as Fig.~\ref{simu_clt}, Fig.~\ref{bounds_fig} depicts the upper and lower bounds of the OAEP in Theorem~\ref{the_oaep} for different second-order coding rates $r$, which is chosen to be proportional to $\overline{C}(\sigma^2)$. The corresponding coding rate is $R=\overline{C}(\sigma^2)+\frac{r}{\sqrt{nM}}$. It can be observed that the gap between the upper and lower bounds is small, which indicates the tightness of the derived bounds.

\begin{figure}[t!]
\centering\includegraphics[width=0.45\textwidth]{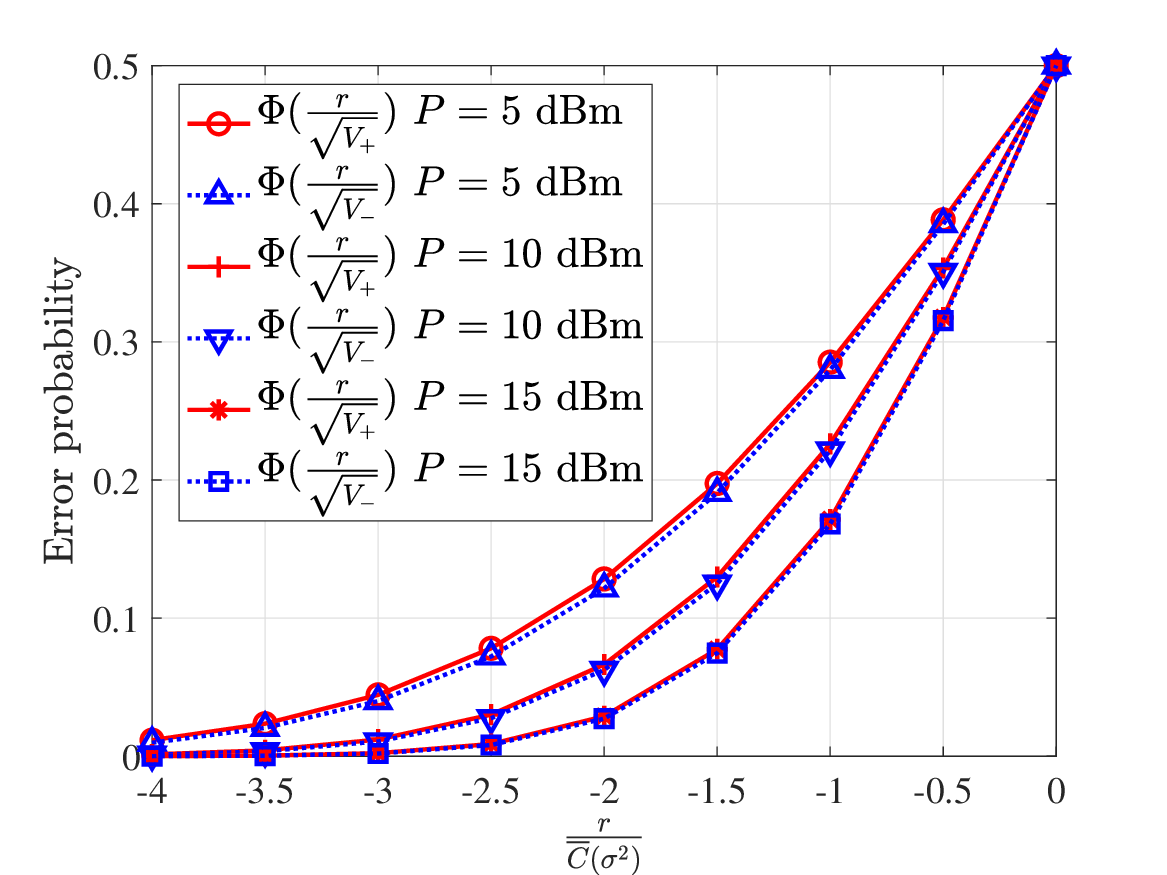}
\caption{Upper and lower bounds of error probability.}
\label{bounds_fig}
\end{figure}

In Fig~\ref{LDPC_performance}, we compare the lower bound of the OAEP and the LDPC error probability with and without phase shift optimization. Here we set $\FR=\BD=\BI_{M}$ and the $1 \slash 2$ LDPC code  in WiMAX standard is adopted~\cite{8303870}. We generate the inputs for the LDPC encoder uniformly and adopt a bit interleaved coded modulation (BICM) with a random interleaver before the QPSK modulation. Under such circumstances, the per antenna coding rate in nats is $R=\log(2)$. The ML demodulator is utilized for the receive signal~\cite{mckay2005capacity} with
\begin{equation}
L(s_{i}|\bold{r},\BH)=\log\frac{\sum_{\bold{c}\in \mathcal{C}^{(i)}_{1} } p(\bold{r}|\bold{c},\BH) }{\sum_{\bold{c}\in \mathcal{C}^{(i)}_{0} } p(\bold{r}|\bold{c},\BH) },
\end{equation}
where $L(s_{i}|\bold{r},\BH)$ represents the log likelihood ratio of $i$-th bit $s_{i}\in \{0,1\}$ and $p(\bold{r}|\bold{c},\BH)$ is the conditional PDF of the receive signal $\bold{r}$. Here $\mathcal{C}^{(i)}_{1}=\{\bold{c}| c_{i}=1,\bold{c}\in \mathcal{C}   \}$ and $\mathcal{C}^{(i)}_{0}$ denote the set of the codewords whose $i$-th digit are $1$ and $0$, respectively. The output of the demodulator is then decoded by the soft-decision LDPC decoder. The length of the LDPC codes is $l=576$ bits and the corresponding blocklength is $n=\frac{l}{2M}=36$. The second-order coding rate is $r=\frac{R-\overline{C}(\sigma^2)}{\sqrt{Mn}}$. It can be observed from Fig.~\ref{LDPC_performance} that Algorithm~\ref{gra_alg} could effectively decrease the lower bound of the OAEP. Furthermore, the lower bound of the OAEP and the performance of the LDPC code have a similar slope. Finally, the phase shifts obtained by Algorithm~\ref{gra_alg} could also effectively decrease the error rate of the LDPC code.
\begin{figure}[t!]
\centering\includegraphics[width=0.45\textwidth]{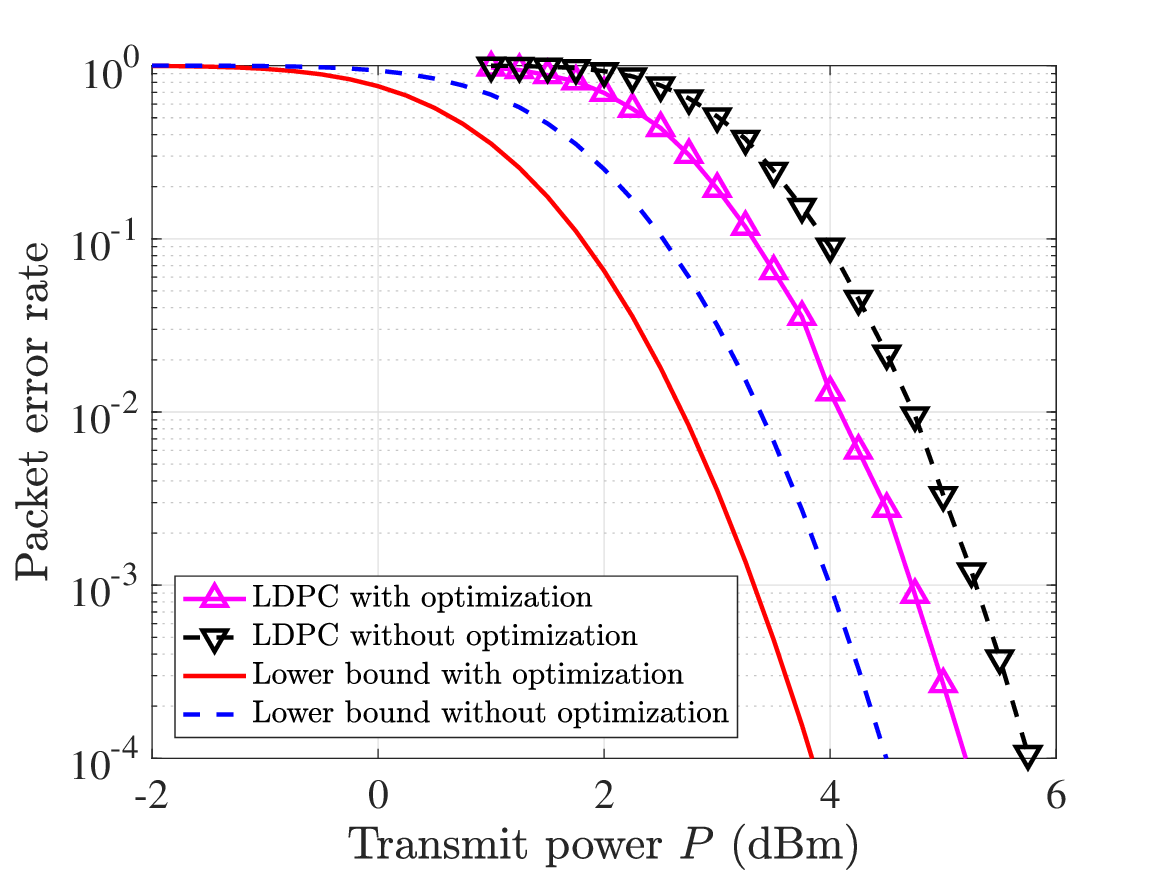}
\caption{LDPC performance.}
\label{LDPC_performance}
\end{figure}

%

\section{Conclusion}
\label{sec:conclusion}
In this work, the OAEP of IRS-aided MIMO systems was investigated by RMT when the coding rate is within a $\BO(\frac{1}{\sqrt{Mn}})$ perturbation of the asymptotic capacity. To this end, we first set up a CLT for the MID and gave the closed-form expressions for the mean and variance in the asymptotic regime, which take previous results for the single-hop MIMO channel and Rayleigh-product channel as special cases. The CLT was then utilized to obtain the upper and lower bounds of the OAEP. Finally, a gradient based algorithm was proposed to minimize the proposed lower bound by optimizing the phase shift. Numerical simulations validate the fitness of the CLT and the effectiveness of the proposed algorithm. It was also observed that the error rate of the LDPC code and the lower bound of the OAEP have a similar slope and the proposed algorithm is also effective in optimizing the packet error rate of the LDPC code.




\appendices
\section{Proof of Theorem~\ref{clt_the}}
\label{proof_clt_the}
In this appendix, we set up the CLT by showing that the characteristic function of the MID converges to that of the Gaussian distribution. This approach has been used in the second-order analysis of the MIMO channels~\cite{hachem2008new,zhang2022asymptotic}. We also derive the closed-form mean and variance during the evaluation of the characteristic function. The characteristic function of the MID is presented as 
\begin{equation}
\Psi^{\BW,\BY,\BX,\BU}_{n}(u)=\E\Phi^{\BW,\BY,\BX,\BU}_{n} ,
\end{equation}
where $\Phi^{\BW,\BY,\BX,\BU}(u)= e^{\jmath u \gamma_{n}^{\BW,\BY,\BX,\BU}}$ and $\gamma_{n}^{\BW,\BY,\BX,\BU}= \sqrt{nM}I_{N,L,M}^{(n)}(\sigma^2)$. There are four random matrices $\BW,\BY,\BX,\BU$ included in the characteristic function. The evaluation of $\Psi^{\BW,\BY,\BX,\BU}_{n}(u)$ is difficult due to the exponential structure. To overcome the difficulty, we first investigate its derivative with respect to $u$, i.e.,
\begin{equation}
\label{der_psi}
\frac{\partial \Psi^{\BW,\BY,\BX,\BU}_{n}(u)}{\partial u}=\E\jmath\gamma_{n}^{\BW,\BY,\BX,\BU} e^{\jmath u\gamma_{n}^{\BW,\BY,\BX,\BU}}.
\end{equation}
Then, by taking integral over the RHS of~(\ref{der_psi}) with respect to $u$, we could find $V_n$ and $\overline{\gamma}_{n}$ such that 
\begin{equation}
\label{conv_cha}
\E e^{\jmath u \frac{\gamma_{n}^{\BW,\BY,\BX,\BU}-\overline{\gamma}_{n}}{\sqrt{V_n}}} \xrightarrow[]{N  \xrightarrow[]{\alpha,\beta, \tau}\infty} e^{-\frac{u^2}{2}},
\end{equation}
which concludes the asymptotic Gaussianity of the MID. The expectation over the four random matrices $\BW$, $\BY$, $\BX$ $\BU$ will be handled iteratively. Specifically, in the first step, we will take the expectation over $\BW$. Then in the second step, we will take the expectation over $\BY$, $\BX$, and $\BU$. Finally, we derive $V_n$ and $\overline{\gamma}_n$ to conclude~(\ref{conv_cha}). Therefore, the proof includes three steps.

\subsection{Step:1 Expectation over $\BW$}
In this step, we will provide an approximation for $\frac{\partial \Psi^{ \BW,\BY,\BX,\BU}(u)}{\partial u}$, which only relies on $\BY$, $\BX$, $\BU$ by taking the expectation over $\BW$. Given assumptions~\textbf{A.1}-\textbf{A.3}, by using the Gaussian tools (the integration by parts formula~\cite[Eq. (40)]{zhang2022asymptotic} and Nash-Poincar{\'e} inequality~\cite[Eq. (18)]{hachem2008new}) and following Step 1 in~\cite[Appendix D.D, Eq. (221) to (240)]{hoydis2015second}, the characteristic function $ \Psi^{ \BW,\BY,\BX,\BU}(u)$ can be approximated by 
\begin{equation}
\label{fir_eva_W}
\begin{aligned}
& \frac{\partial  \Psi^{\BW,\BY,\BX,\BU}(u)}{\partial u}= \jmath u \E\mu^{\BY,\BX,\BU}_{n}\Phi_{n}^{\BY,\BX,\BU}
\\
&
 -\E\frac{u^2}{2}\nu^{\BY,\BX,\BU}_{n} \Phi_{n}^{\BY,\BX,\BU} +\BO(\frac{1}{M})
\\
&
=C_1+C_2+\BO(\frac{1}{M}),
\end{aligned}
\end{equation}
where
\begin{align*}
&\Phi_{n}^{\BY,\BX,\BU}=e^{ \iota^{\BY,\BX,\BU}_{n}},
\\
&
\iota^{\BY,\BX,\BU}_{n}=\jmath u \mu^{\BY,\BX,\BU}_{n}-\frac{u^2}{2}\nu^{\BY,\BX,\BU}_{n}
+\frac{\jmath u^3 \eta^{\BY,\BX,\BU}_{n}}{3},
\\
&\mu_{n}^{\BY,\BX,\BU}\!=\!\sqrt{\frac{n}{M}}\log\det\left(\bold{I}_N+\BH\BH^{H}\right)\! - \! \frac{n}{\sqrt{Mn}}\Tr\BQ\BH\BA\BH^{H},
\\
&\nu_n^{\BY,\BX,\BU}=\frac{n}{Mn}\Tr(\BQ\BH\BH^{H})^2+\frac{2\sigma^2n}{Mn}\Tr\BQ^2\BH\frac{\BX\BX^{H}}{n}\BH^{H},
\\
&\eta_n^{\BY,\BX,\BU}=\frac{n}{\sqrt{n^3M^3}}\Tr(\BQ\BH\BH^{H})^3
\\
&
+\frac{3\sigma^2n}{\sqrt{n^3M^3}}\Tr\BQ^2\BH\BH^{H}\BQ\BH\frac{\BX\BX^{H}}{n}\BH^{H}. \numberthis
\end{align*}
By~(\ref{fir_eva_W}), the evaluation of the characteristic function resorts to those of $C_1$ and $C_2$, which will be given in the next step.

\subsection{Step:2 Expectation over $\BY$, $\BX$, and $\BU$}
By $\log(\det(\sigma^2\bold{I}_N)+\BH\BH^{H})= \int_{\sigma^2}^{\infty}\frac{N}{z}- \Tr \BQ(z)  \mathrm{d}z$~\cite[Eq (4)]{zhang2021bias}, $C_1$ can be evaluated as
\begin{equation}
\label{C_1_eva}
\begin{aligned}
&
C_1=(\jmath\sqrt{\frac{n}{M}}\int_{\sigma^2}^{\infty} \frac{N\E\Phi_{n}^{\BY,\BX,\BU}}{z}
-\E\Tr\BQ(z)\Phi_{n}^{\BY,\BX,\BU}\mathrm{d}z)
\\
&
-\frac{n}{\sqrt{Mn}}\E\Tr\BQ\BH\BA\BH^{H}\Phi_{n}^{\BY,\BX,\BU}+o(1)
\\
&=C_{1,1}+C_{1,2}+o(1).
\end{aligned}
\end{equation}
Here $C_{1,1}$ can be handled through similar analysis as~\cite[Section VI]{zhang2022asymptotic} to obtain
\begin{equation}
\begin{aligned}
C_{1,1}&=
\jmath \sqrt{nM}\overline{C}(\sigma^2)\E\Phi_{n}^{\BY,\BX,\BU}
\\
&-\frac{un}{M}(-\log(\Xi))\E\Phi_{n}^{\BY,\BX,\BU}+o(1).
\end{aligned}
\end{equation}
Next, we focus on the evaluation of $C_{1,2}$. Denote $\MS_{+}=\RT_{IRS}\bold{\Phi}\LR_{IRS}$ and $\BZ=\LR\BX\MS_{+}$. By the integration by parts formula~\cite[Eq. (40)]{zhang2022asymptotic}, we can evaluate $\E\Tr\BQ\BH\BA\BH^{H}\Phi_{n}^{\BY,\BX,\BU}$ in~(\ref{U12app}) at the top of the next page.
\begin{figure*}
\vspace{-0.7cm}
\begin{align*}
\label{U12app}
&\E\Tr\BQ\BH\BA\BH^{H}\Phi_{n}^{\BY,\BX,\BU}
=\sum_{i,j}\E[\BY]_{j,i}^{*}[\BZ^{H}\BQ\BH\BA]_{j,i}\Phi_{n}^{\BY,\BX,\BU}+\E[\BU]_{j,i}^{*}[\LD\BQ\BH\BA]_{j,i}\Phi_{n}^{\BY,\BX,\BU}
=\frac{1}{M}\E\Tr\BQ\BZ\BZ^{H}\Tr\BA
\\
&
-\frac{1}{M}\E\Tr\BQ\BZ\BZ^{H}\E\Tr\BQ\BH\BA\BH^{H}\E\Phi_{n}^{\BY,\BX,\BU}
+\E[\frac{\jmath u \sqrt{n}}{M^{\frac{3}{2}}}\Tr\BQ\BZ\BZ^{H}\BQ\BH\BA\BH^{H}
-\frac{\jmath u n}{\sqrt{Mn}M}\Tr\BQ\BZ\BZ^{H}\BQ\BH\BA\BH^{H}\BQ\BH\BA\BH^{H}
\\
&
+\frac{\jmath u n}{\sqrt{Mn}}\frac{\Tr\BZ\BZ^{H}\BQ\BH\BA^2\BH^{H}\BQ}{M}]\E\Phi_{n}^{\BY,\BX,\BU}
+\E\frac{\Tr\BD\BQ}{M}\Tr\BA\Phi_{n}^{\BY,\BX,\BU}-\frac{1}{M}\E\Tr\BD\BQ\E\Tr\BQ\BH\BA\BH^{H}\Phi_{n}^{\BY,\BX,\BU}  \numberthis
\\
&
+\E [\frac{\jmath u \sqrt{n}}{M^{\frac{3}{2}}}\Tr\BQ\BD\BQ\BH\BA\BH^{H}
-\frac{\jmath u n}{\sqrt{Mn}M}\Tr\BQ\BD\BQ\BH\BA\BH^{H}\BQ\BH\BA\BH^{H}
+\frac{\jmath u n}{\sqrt{Mn}}\frac{\Tr\BD\BQ\BH\BA^2\BH^{H}\BQ}{M}
]\E\Phi_{n}^{\BY,\BX,\BU}+o(1)
\end{align*}
\hrulefill
\end{figure*}
By noticing $\Tr\BA=0$, moving $-\frac{1}{M}\E\Tr\BQ\BZ\BZ^{H}\E\Tr\BQ\BH\BA\BH^{H}\E\Phi_{n}^{\BY,\BX,\BU}$ and $-\frac{1}{M}\E\Tr\BQ\BD\E\Tr\BQ\BH\BA\BH^{H}\E\Phi_{n}^{\BY,\BX,\BU}$ to the LHS of~(\ref{C_2_eva}) and using the approximations in~(\ref{trace_appro}), we have
\begin{equation}
\label{QHAHPhiYZ}
\begin{aligned}
&
\E\Tr\BQ\BH\BA\BH^{H}\Phi_{n}^{\BY,\BX,\BU}
\\
&
=
\frac{\overline{\omega}n}{\sqrt{Mn}}\frac{\E\Tr\BQ\BZ\BZ^{H}\BQ\BH\BA\BH^{H}\BQ\BH\BA\BH^{H}}{M}\E\Phi_{n}^{\BY,\BX,\BU}
\\
&
-\frac{\overline{\omega}n}{\sqrt{Mn}}\frac{\E\Tr\BZ\BZ^{H}\BQ\BH\BA^2\BH^{H}\BQ}{M}\E\Phi_{n}^{\BY,\BX,\BU}
\\
&
\frac{\overline{\omega}n}{\sqrt{Mn}}\frac{\E\Tr\BQ\BD\BQ\BH\BA\BH^{H}\BQ\BH\BA\BH^{H}}{M}\E\Phi_{n}^{\BY,\BX,\BU}
\\
&
-\frac{\overline{\omega}n}{\sqrt{Mn}}\frac{\E\Tr\BD\BQ\BH\BA^2\BH^{H}\BQ}{M}\E\Phi_{n}^{\BY,\BX,\BU}+o(1)
\\
&
=\frac{\overline{\omega}^2 n}{\sqrt{Mn}}(\frac{\E\Tr\BQ\BZ\BZ^{H}\BQ\BH\BA^2\BH^{H} }{M}
\\
&
+\frac{\E\Tr\BQ\BD\BQ\BH\BA^2\BH^{H} }{M})\E\Phi_{n}^{\BY,\BX,\BU}+o(1)
\\
&= \frac{\overline{\omega}^4 n}{\sqrt{Mn}}  \frac{\Tr\BA^2}{M}(\chi+\zeta_D+\zeta_D+\Upsilon_D)+o(1). 
\end{aligned}
\end{equation}
where $\chi=\frac{\E\Tr\BQ\BZ\BZ^{H}\BQ\BZ\BZ^{H}}{M}$, $\zeta_D=\frac{\E\Tr\BQ\BZ\BZ^{H}\BQ\BD}{M}$, and $\Upsilon_D=\frac{\E\Tr\BQ\BD\BQ\BD}{M}$. By similar computation in~(\ref{U12app}),  we can evaluate $\chi+\zeta_D$ and $\zeta_D+\Upsilon_D$ by
\begin{align}
\chi+\zeta_D &=\frac{1}{\Delta_S}(\zeta_{D}+\gamma_{S}+\frac{M\gamma_{S,I}\zeta_R}{L\delta^2}),
\\
\zeta_{D}+\Upsilon_D &=\frac{\gamma_{S,I}\E \Tr\BQ\BD\BQ\FR }{\delta^2 L\Delta_S}+\frac{\Upsilon_D}{\Delta_S}+o(1),\nonumber
\end{align}
where $\zeta_{R}=\frac{\E\Tr\BZ\BZ^{H}\BQ\FR\BQ}{M}$. Then, we can obtain
\begin{equation}
 \begin{aligned}
& \chi+2\zeta_D+\Upsilon_D=\frac{1}{\Delta_{S}}(\gamma_{S}+\zeta_D+\Upsilon_{D}
\\
&
+\frac{M\gamma_{S,I}(\zeta_R+\frac{\E\Tr\BQ\FR\BQ\BD}{M})}{L\delta^2}  )+o(1)
 \\
 &=\frac{1}{\Delta_S}\{\gamma_{S}+\frac{1}{\Delta_S}[\frac{\gamma_{S,I} \E\Tr\BQ\BD\BQ\FR }{\delta^2 L}+\frac{\E\Tr\BQ\BD\BQ\BD}{M}
 \\
 &
 +\frac{M\gamma_{S,I}(\frac{\gamma_{S,I} \E\Tr\BQ\FR\BQ\FR }{\delta^2 L}+\frac{\E\Tr\BQ\FR\BQ\BD}{M})}{L\delta^2}]\}+o(1).
\end{aligned}
\end{equation}
Therefore, we turn to evaluate $\frac{\gamma_{S,I} \E\Tr\BQ\FR\BQ\FR }{\delta^2 L}+\frac{\E\Tr\BQ\FR\BQ\BD}{M}$ and $\frac{\gamma_{S,I} \E\Tr\BQ\FR\BQ\BD }{\delta^2 L}+\frac{\E\Tr\BQ\BD\BQ\BD}{M}$. To this end, we evaluate a more general form $P(\BM)=\frac{\gamma_{S,I}\E\Tr\BM\BQ\FR\BQ}{L\delta^2}+\frac{\E\Tr\BM\BQ\BD\BQ}{M}$ for any $\BM$ with $\limsup_{M \ge 1}\|\BM \|<\infty$ and define the auxiliary value $E(\BM)=\frac{\E\Tr\FR\BQ\BM\BQ}{L}$. By the Gaussian tools, we can set up a system of equations with respect to $E(\BM)$ and $P(\BM)$, which is given in~(\ref{eqs_second_eva}) in the middle of the next page, where $\gamma_{R,M}=\frac{\Tr\FR\BG\BM\BG}{L}$ and $\gamma_{D,M}=\frac{\Tr\BD\BG\BM\BG}{L}$. Therefore, we have
\begin{equation}
\begin{aligned}
&\chi+2\zeta_D+\Upsilon_D=\frac{1}{\Delta_{S}}(\gamma_{S}+\frac{P(\BD)+\frac{M\gamma_{S,I}P(\FR)}{L\delta^2}}{\Delta_{S}})
\\
&=\frac{\gamma_S}{\Delta_{S}}+(\frac{ [\bold{\Pi}^{-1}]_{(2)}\bold{q}_{RD}  }{\Delta_S^2}+\frac{\gamma_S}{\Delta_S}).
\end{aligned}
\end{equation}
\begin{figure*}
\begin{equation}
\label{eqs_second_eva}
\begin{bmatrix}
1-\frac{M\gamma_{S}\overline{\omega}^2\gamma_{R}}{L\delta^2} & -\frac{M\overline{\omega}^2 \gamma_{R}\gamma_{S,I} }{L\delta^2\Delta_S}- \frac{\overline{\omega}^2\gamma_{R,D}}{\Delta_S}
\\
  -\frac{\gamma_S\overline{\omega}^2(\frac{M\gamma_{S,I}\gamma_R}{L\delta^2}+\gamma_{R,D})}{\delta^2} &1-\frac{\overline{\omega}^2 (\frac{M\gamma_{S,I}\gamma_{R}}{L\delta^2}+ \gamma_{R,D})\gamma_{S,I} }{\delta^2\Delta_S}- \frac{\overline{\omega}^2(\frac{\gamma_{S,I}\gamma_{R,D}}{\delta^2}+\gamma_{D})}{\Delta_S}
\end{bmatrix}
\begin{bmatrix}
E(\BM) 
\\
P(\BM)
\end{bmatrix}
=
\begin{bmatrix}
\gamma_{R,M}
\\
\frac{\gamma_{S,I}\gamma_{R,M}}{\delta^2}+\frac{L\gamma_{D,M}}{M}
\end{bmatrix}+o(1).
\end{equation}
\hrulefill
\end{figure*}
Now we will evaluate $C_2$. By the definition of $\BQ$ in~(\ref{resol_def}), we have $\BQ\BH\BH^{H}=\bold{I}-\sigma^2\BQ$ and
\begin{equation}
\label{C_2_eva}
\begin{aligned}
& C_2 = \E [ \frac{N}{M} - \frac{\sigma^4}{M}\Tr\BQ^2 -\frac{2\sigma^2}{M}\Tr\BQ^2\BH\BA\BH^{H}]\E\Phi_{n}^{\BY,\BX,\BU} 
\\
&
+ o(1)
 =(\alpha-\frac{\sigma^4 C_{2,1}}{\beta}+C_{2,2})\E \Phi_{n}^{\BY,\BX,\BU} + o(1),
\end{aligned}
\end{equation}
where $C_{2,1}=\frac{\E\Tr\BQ^2}{L}$ can be evaluated by
\begin{equation}
\frac{\E\Tr\BQ^2}{L}= \gamma_{I}+\bold{p}_{I}^{T}\bold{\Pi}^{-1}\bold{q}_{I}+o(1),
 \end{equation}
 and $C_{2,2}= -\frac{2\sigma^2}{M}\Tr\BQ^2\BH\BA\BH^{H}=o(1)$. The detailed computation is similar to~\cite[Appendix G]{zhang2022asymptotic} and omitted here.

\subsection{Step 3: Convergence of $\E e^{\jmath\frac{\gamma_{n}^{\BW,\BY,\BX,\BU}-\overline{\gamma}_{n}}{\sqrt{V_n}}}$}
By~(\ref{fir_eva_W}),~(\ref{C_1_eva}), and~(\ref{C_2_eva}), we can obtain the derivative of the characteristic function
\begin{equation}
\label{diff_eq}
\begin{aligned}
&\frac{\partial  \Psi^{\BW,\BY,\BX,\BU}(u)}{\partial u}\xrightarrow[]{N  \xrightarrow[]{\alpha,\beta, \tau}\infty}
\\
&
 (\jmath \sqrt{nM}\overline{C}(\sigma^2) -u V_n )
\Psi^{\BW,\BY,\BX,\BU}(u)
\end{aligned}
\end{equation}
and conclude~(\ref{conv_cha}) by
\begin{equation}
\label{conver_cha}
\begin{aligned}
&
\Psi^{\BW,\BY,\BX,\BU}_{norm}(u)=\E e^{\jmath u \frac{\gamma_{n}-\overline{\gamma}_n}{\sqrt{V_n}}}
\\
&
=\Psi^{\BW,\BY,\BX,\BU}(\frac{u}{\sqrt{V_n}})e^{-\jmath u\frac{\overline{\gamma}_{n}}{\sqrt{V_n}}}
+o(1)
\\
&=e^{-\frac{u^2}{2}}
+o(1)
=e^{-\frac{u^2}{2}}
+o(1).
\end{aligned}
\end{equation}

\bibliographystyle{IEEEtran}
\bibliography{IEEEabrv,ref}
\end{document}